\newcommand{\ua}{\underline a \,}
\newcommand{\ub}{\underline b \,}
\newcommand{\uA}{\underline A \,}
\newtheorem{theorem}{Theorem}[section]
\numberwithin{equation}{section}
\begin{document}
\bibliographystyle{unsrt}

\title{On the total mass of closed universes}
\author{L\'aszl\'o B Szabados \\
Wigner Research Centre for Physics, \\
H-1525 Budapest 114, P. O. Box 49, Hungary, \\
E-mail: lbszab@rmki.kfki.hu}

\maketitle

\begin{abstract}
The total mass, the Witten type gauge conditions and the spectral properties 
of the Sen--Witten and the 3-surface twistor operators in closed universes 
are investigated. It has been proven that a recently suggested expression 
${\tt M}$ for the total mass density of closed universes is vanishing if and 
only if the spacetime is flat with toroidal spatial topology; it coincides 
with the first eigenvalue of the Sen--Witten operator; and it is vanishing if 
and only if Witten's gauge condition admits a non-trivial solution. 

Here we generalize slightly the result above on the zero-mass configurations: 
${\tt M}=0$ if and only if the spacetime is \emph{holonomically trivial} with 
toroidal spatial topology. Also, we show that the multiplicity of the 
eigenvalues of the (square of the) Sen--Witten operator is even, and a 
potentially viable gauge condition is suggested. The monotonicity properties 
of ${\tt M}$ through the examples of closed Bianchi I and IX cosmological 
spacetimes are also discussed. A potential spectral characterization of these 
cosmological spacetimes, in terms of the spectrum of the Riemannian Dirac 
operator and the Sen--Witten and the 3-surface twistor operators, is also 
indicated.
\end{abstract}

%%%%%%%%%%%%%%%%%%%%%%%%%%%%%%%%%%%%%%%%%%%%%%%%%%%%%%%%%%%%%%%

\section{Introduction}
\label{sec:1}

In the present paper we discuss three, apparently independent issues: total 
masses and mass bounds, the spectral properties of certain differential 
operators, and gauge conditions on closed spacelike hypersurfaces. (A 
detailed discussion of these problems will be given in the following three 
subsections of the introduction, where we also formulate our specific 
questions.) However, it turned out that these questions are not independent, 
and our results provide common generalizations of previous particular ones. 
We review the key techniques and results (which have already appeared in 
\cite{Sz12}) in sections \ref{sec:2}--\ref{sec:5}, and, in section \ref{sec:3}, 
we generalize our previous result on the zero-mass spacetime configuration: 
the total mass density is zero if and only if the spacetime is 
\emph{holonomically trivial} (and not only locally flat) with toroidal spatial 
topology. Then, in section \ref{sec:6}, we illustrate these in the Bianchi I. 
and FRW spacetimes, and, in section \ref{sec:7}, we summarize the message of 
this contribution. 

Here we use the abstract index formalism, and our sign conventions are those 
of \cite{PR}. In particular, the signature of the spacetime metric is 
$(+,-,-,-)$, and the curvature tensor is defined by $-R^a{}_{bcd}X^bV^cW^d:=
V^c\nabla_c(W^d\nabla_dX^a)-W^c\nabla_c(V^d\nabla_dX^a)-[V,W]^c\nabla_cX^a$ for 
any vector fields $X^a$, $V^a$ and $W^a$. Thus, Einstein's equations take the 
form $G_{ab}:=R_{ab}-\frac{1}{2}Rg_{ab}=-\kappa T_{ab}$, where $\kappa:=8\pi G$ 
and $G$ is Newton's gravitational constant. 

%%%%%%%%%%%%%%

\subsection{Total masses}
\label{sec:1.1}

It is well known that, as a consequence of the principle of equivalence, 
there is no well defined notion of gravitational energy-momentum 
\emph{density}. On the other hand, for spacetimes which are asymptotically 
flat at spatial or null infinity, there are well defined notions of total 
energy-momentum of the localized matter\,+\,gravity system. These are the 
Arnowitt--Deser--Misner (ADM) \cite{ADM62} and the Bondi--Sachs (BS) 
\cite{Bondi:etal62, Sachs62, Exton:etal69} energy-momenta, respectively. 
They are given by two-surface integrals on spheres at the spatial and (e.g. 
future) null infinity, rather than integrals of some local expression on 
spacelike hypersurfaces extending to the spatial and null infinity, 
respectively. The corresponding total masses are defined as their Lorentzian 
length. 

However, the form of these total energy-momentum expressions is rather 
different, and, in fact, completely different formalism and techniques are 
used to analyze the behaviour of the fields and the gravitation in the two 
asymptotic zones. Hence it is natural to search for a unified mathematical 
formulation of the two concepts of total energy-momentum. 

This unified form can be based on an appropriate superpotential $u(K)_{ab}$, 
which is a differential 2-form depending on some vector field $K^a$, such that 
in the so-called superpotential equation, 

\begin{equation}
\nabla_{[a}u\left(K\right)_{bc]}=J\left(K\right)_{abc}+\frac{1}{2}\kappa\, 
K^fT_{fe}\frac{1}{3!}\varepsilon^e{}_{abc}, \label{eq:1.1.1}
\end{equation}
the current 3-form $J(K)_{abc}$ is a homogeneous quadratic expression of the 
first derivative of the gravitational field variables. Here $\varepsilon
_{abcd}$ is the spacetime volume 4-form, and we used Einstein's equations. 
Then, for any smooth spacelike hypersurface $\Sigma$ with smooth boundary 
${\cal S}:=\partial\Sigma$, the `conserved quantity' generated by the 
spacetime vector field $K^a$ is defined by 

\begin{equation}
{\tt Q}\left[K\right]:=\frac{2}{\kappa}\oint_{\partial\Sigma}
u\left(K\right)_{ab}=\int_\Sigma\Bigl(\frac{2}{\kappa}J\left(K\right)_{abc}+
K^fT_{fe}\frac{1}{3!}\varepsilon^e{}_{abc}\Bigr). \label{eq:1.1.2}
\end{equation}
(If $\Sigma$ extends to the spatial or future null infinity in an 
asymptotically flat spacetime, then the integral on its boundary at infinity 
is defined in a limiting procedure.) Thus the key question is how to choose 
the generator vector field $K^a$ and the superpotential $u(K)_{ab}$ to recover 
the ADM/BS 4-momenta in the form ${\tt Q}[K]$? 

This question was addressed (among others) by Horowitz and Tod in 
\cite{Horowitz:Tod}. They showed that a particularly successful strategy is 
to use two-component spinors, and to choose the generator vector field to be 
$K^a=\lambda^A\bar\lambda^{A'}$ and the superpotential to be the Nester--Witten 
2-form 

\begin{equation}
u(\lambda)_{ab}:=\frac{\rm i}{2}\left(\bar\lambda_{A'}\nabla_{BB'}\lambda_A-
\bar\lambda_{B'}\nabla_{AA'}\lambda_B\right), \label{eq:1.1.3}
\end{equation}
where the spinor field $\lambda^A$ is still not specified. (The original form 
of the superpotential 2-form that Witten \cite{Wi} and Nester \cite{Ne} used 
was given in terms of Dirac spinors. The above form in terms of Weyl spinors 
was introduced by Horowitz and Tod \cite{Horowitz:Tod}.) To recover the ADM/BS 
energy-momenta the spacelike hypersurface $\Sigma$ should be chosen to be 
\emph{asymptotically flat}/\emph{asymptotically hyperboloidal}, and the spinor 
field $\lambda^A$ should tend to its own asymptotic value ${}_\infty\lambda^A$ 
in an appropriate order. In addition, the asymptotic value ${}_\infty\lambda^A$ 
should be \emph{constant} with respect to the two dimensional Sen type 
connection (see \cite{Sz94}), or should solve the \emph{asymptotic twistor 
equation} of Bramson \cite{Br}, respectively, on the metric 2-spheres at 
infinity. The geometric meaning of these boundary conditions is that these 
spinor fields are the \emph{spinor constituents of the asymptotic 
translations} at spatial and null infinity, respectively. The space of 
solutions of these equations form a two-complex dimensional spin space. If 
$\{\varepsilon^A_{\uA}\}$, ${\uA}=0,1$, is a normalized spin frame in these 
solution spaces, and ${}_\infty\lambda^{\uA}$ denotes the components of 
${}_\infty\lambda^A$ in this frame, then the components ${\tt P}_{\ua}$, 
${\ua}=0,1,2,3$, of the total energy-momenta can be recovered as 

\begin{equation}
{\tt P}_{\ua}\sigma^{\ua}_{\uA{\uA}'}\,{}_\infty\lambda^{\uA}\,{}_\infty\bar\lambda
^{{\uA}'}:=\frac{2}{\kappa}\lim_{r\rightarrow\infty}\oint_{{\cal S}_r}u\left(\lambda
\right)_{cd}, \label{eq:1.1.4}
\end{equation}
where $\sigma^{\ua}_{\uA{\uA}'}$ are the standard $SL(2,\mathbb{C})$ Pauli 
matrices. (Thus, while Latin indices are abstract tensor or spinor indices, 
the \emph{underlined} and (below) the \emph{boldface} Latin indices are 
concrete \emph{name} indices, taking numerical values.) 

Let $E^a_{\ua}:=\varepsilon^A_{\uA}\bar\varepsilon^{A'}_{{\uA}'}\sigma_{\ua}
^{{\uA}{\uA}'}$, the orthonormal vector basis determined by this spinor dyad. 
If on the asymptotically flat $\Sigma$ the basis $\{E^a_{\ua}\}$ is chosen to 
be adapted to the hypersurface $\Sigma$ (in the sense that $E^a_0$ is just the 
future pointing timelike normal of $\Sigma$), then the traditional ADM energy 
${\tt E}$ and linear momentum ${\tt P}_{\bf i}$, ${\bf i}=1,2,3$, can be 
recovered as the time and spacelike components, respectively, of ${\tt P}
_{\ua}$. If, in addition, the spinor field $\lambda^A$ is chosen to be 
normalized at infinity with respect to the timelike normal of $\Sigma$, i.e. 
${}_\infty t_{AA'}\,{}_\infty\lambda^A\,{}_\infty\bar\lambda^{A'}=1$, then the 
left hand side of (\ref{eq:1.1.4}) has the structure ${\tt E}+{\tt P}_{\bf i}
v^{\bf i}$, where $v^{\bf i}v^{\bf j}\delta_{\bf ij}=1$. We will need this form of 
the left hand side of equation (\ref{eq:1.1.4}). 

Similarly, if on the asymptotically hyperboloidal $\Sigma$ the spin frame 
$\{\varepsilon^A_{\uA}\}$ in the solution space of the asymptotic twistor 
equation is chosen such that the vector $E^a_0=\varepsilon^A_{\uA}\bar
\varepsilon^{A'}_{{\uA}'}\sigma_0^{{\uA}{\uA}'}$ is the BMS time translation of 
the future null infinity, then the BS energy and linear momentum are just 
the time and spacelike components, respectively, of ${\tt P}_{\ua}$. If, in 
addition, the spinor field is normalized such that $\sigma^0_{{\uA}{\uA}'}\,
{}_\infty\lambda^{\uA}\,{}_\infty\bar\lambda^{{\uA}'}=1$, then the left hand side 
of (\ref{eq:1.1.4}) has the structure ${\tt E}+{\tt P}_{\bf i}v^{\bf i}$. 

Then the positive energy theorems (see \cite{Wi, Ne, Reula, Horowitz:Tod, RT, 
Pa}) guarantee that the total energy-momenta, both the ADM and the BS, are 
future pointing and timelike with respect to the natural Lorentzian metric 
$\eta_{\ua\ub}:={\rm diag}(1,-1,-1,-1)$ coming from the symplectic scalar 
product of the solution spaces, provided the dominant energy condition is 
satisfied on the regular, asymptotically flat and asymptotically hyperboloidal 
$\Sigma$, respectively. Introducing the total mass ${\tt m}$ according to 
${\tt m}^2:=\eta_{\ua\ub}{\tt P}^{\ua}{\tt P}^{\ub}$, the positive energy 
theorems can be restated as ${\tt m}\geq0$. The rigidity part of these 
theorems guarantees that if the total mass is zero, ${\tt m}=0$, then the 
domain of dependence of $\Sigma$ is flat. 

The mass positivity results motivate the question whether we can find a 
\emph{strictly positive lower bound} for the total (ADM and BS) masses, i.e. 
that ${\tt m}^2\geq{\tt M}^2>0$ holds for some ${\tt M}$. Another question 
came from a recent result of B\"ackdahl and Valiente-Kroon \cite{BV}. They 
showed by explicit calculation that in vacuum, asymptotically flat spacetimes 
the ADM mass can be expressed as the $L_2$--norm on some spacelike 
hypersurface of the 3-surface twistor derivative of an appropriate spinor 
field: ${\tt m}_{ADM}={\rm const.}\Vert{\cal D}_{(AB}\lambda_{C)}\Vert_{L_2}^2$. 
Thus the question is whether we can have a similar expression for ${\tt m}
_{BS}$, too, and how this result can be generalized for the \emph{non-vacuum} 
case. 

Since the expected general form of the energy-momentum in General Relativity 
is a two-surface integral, it does not seem to be possible to associate any 
well defined notion of total energy-momentum, or at least total mass, with 
\emph{closed universes}. However, this does \emph{not} mean \emph{a priori} 
that a reasonable and useful notion of total mass cannot be associated with 
closed universes in some other way. This could perhaps be based on the idea 
that the total mass should be some \emph{positive definite measure} of the 
strength of the gravitational `field'.

%%%%%%%%%%%%%%%

\subsection{Spectral characterization of geometries}
\label{sec:1.2}

A potentially viable strategy to characterize Riemannian manifolds in an 
invariant way could be based on the study of the (structure of the) spectrum 
of elliptic (e.g. Laplace, Dirac, etc.) operators. The key idea is that e.g. 
the eigenvalues of elliptic operators encode (maybe in some highly 
non-explicit, but invariant way) certain properties of the geometry, and in 
the ideal case the whole geometry could be hoped to be characterized 
completely by the spectrum of a sufficiently large number of elliptic 
operators. Thus first it should be clarified e.g. how the eigenvalues reflect 
the properties of the geometry. 

The first who obtained such a link between the eigenvalues and certain 
properties of the geometry was probably Lichnerowicz \cite{Lich}. He showed 
that on closed $m$ dimensional spin manifolds $M$ with non-negative scalar 
curvature $R$ for the 1st eigenvalue of the Dirac operator, ${\rm i}\gamma
^\alpha_e{}_\beta D^e\Psi^\beta=\alpha_1\Psi^\alpha$, one has $\alpha^2_1\geq
\frac{1}{4}\inf\{ R(p)\vert p\in M\}$. This lower bound is, however, not 
sharp. Lichnerowicz's bound was increased by Friedrich \cite{Fr} by giving 
the \emph{sharp} lower bound: $\alpha^2_1\geq\frac{m}{4(m-1)}\inf\{ R(p)\vert 
p\in M\}$. This bound is, in fact, saturated by the metric spheres. Later, 
several other sharp bounds were derived under various geometrical conditions. 

From the point of view of General Relativity it would be desirable to extend 
the above results from Riemannian manifolds to \emph{initial data sets}. In 
fact, Hijazi and Zhang \cite{HZ} derived a sharp lower bound for the 1st 
eigenvalue of the Sen--Witten operator on closed hypersurfaces in Lorentzian 
geometries. In terms of the standard notions in GR their bound is 

\begin{equation}
\alpha^2_1\geq\frac{3}{4}\kappa\,\inf_{l^a}\,\frac{\int_\Sigma t^aT_{ab}l^b
{\rm d}\Sigma}{\int_\Sigma t_cl^c{\rm d}\Sigma}, \label{eq:1.2.1}
\end{equation}
where the infimum is taken on the set of all the future pointing null vector 
fields $l^a$ on the hypersurface. Thus the bound is the infimum of an average 
of the \emph{total matter energy} in $\Sigma$. (The bound of Hijazi and Zhang 
was rediscovered independently in \cite{Sz07}, and the form (\ref{eq:1.2.1}) 
of their bound is taken from \cite{Sz07}.) 

Nevertheless, the lower bound (\ref{eq:1.2.1}) is zero in vacuum, giving no 
restriction on the eigenvalues. This motivates the question whether we can 
find an even greater, and hence sharp, lower bound which is not trivial even 
in vacuum. A more ambitious claim is to derive an \emph{expression for the 
first eigenvalue} itself, rather to have only a lower bound for it. 

%%%%%%%%%%%%%%%%%%

\subsection{Gauge conditions}
\label{sec:1.3}

In various specific problems of General Relativity (e.g. in the energy 
positivity proofs, evolution problems, numerical calculations, etc) it is 
desirable to reduce the huge gauge freedom of the theory. These conditions 
are used to single out some `preferred' frame of reference, which are built 
from a special spinor field (see e.g. \cite{joerg}). Such conditions are, 
for example, the Witten \cite{Wi}, the Parker \cite{Pa} and the Nester 
\cite{Ne2} gauge conditions; and the so-called approximate twistor equation 
of B\"ackdahl and Valiente-Kroon \cite{BV} can also be interpreted as a gauge 
condition. In the present contribution we discuss only the gauge conditions 
of Witten and of B\"ackdahl and Valiente-Kroon. 

Witten's gauge condition is simply the differential equation ${\cal D}_{A'A}
\lambda^A=0$ for the spinor field on $\Sigma$ which satisfies some boundary 
condition, where ${\cal D}_a:=P^b_a\nabla_b$ is the projection to $\Sigma$ of 
the spacetime Levi-Civita derivative operator $\nabla_a$, known as the Sen 
connection. Similarly, the approximate twistor equation on $\Sigma$ is 
${\cal T}^*{\cal T}(\lambda)=0$, which is the composition of the 3-surface 
twistor operator ${\cal T}$ of Tod \cite{Tod84} and its formal adjoint 
${\cal T}^*$. The former is defined to be the totally symmetric part of the 
derivative, i.e. ${\cal T}:\lambda_A\mapsto{\cal D}_{(AB}\lambda_{C)}$, where 
${\cal D}_{AB}$ is the unitary spinor form of ${\cal D}_{AA'}$ (see e.g. 
\cite{Reula}). The formal adjoint ${\cal T}^*$ is defined with respect to 
the $L_2$ scalar product, and we give its explicit form by (\ref{eq:5.1}) in 
section \ref{sec:5}. It is known that both the Witten and the approximate 
twistor equations admit non-trivial solutions on asymptotically flat 
hypersurfaces. (For the proofs see e.g. \cite{RT} and \cite{BV}, respectively. 
The existence of solutions of the Witten equation on asymptotically 
hyperboloidal hypersurfaces was also demonstrated in \cite{RT}.) 

Nevertheless, as far as we know, it has not been clarified whether these 
gauge conditions admit non-trivial solutions on \emph{closed} spacelike 
hypersurfaces. Hence it is natural to ask whether these gauge conditions can 
be imposed in closed universes, and if not, then how they can be modified to 
obtain an appropriate one.

%%%%%%%%%%%%%%%%%%%%%%%%%%%%%%%%%%%%%%%%%%%%%%%%%%%%%%%%%%%%%%%%%%

\section{The norm identity}
\label{sec:2}

In our investigations the key geometric ingredient is the Reula--Tod (or 
$SL(2,\mathbb{C})$ spinor) form \cite{RT} of the Sen--Witten identity: 

\begin{eqnarray}
D_a\!\!\!\!&\Bigl(\!\!\!\!&t^{A'B}\bar\lambda^{B'}{\cal D}_{BB'}\lambda^A-
 \bar\lambda^{A'}t^{AB'}{\cal D}_{B'B}\lambda^B\Bigr)+2t^{AA'}\bigl({\cal D}_{AB'}
 \bar\lambda^{B'}\bigr)\bigl({\cal D}_{A'B}\lambda^B\bigr)\nonumber \\
\!\!\!\!&=\!\!\!\!&-t_{AA'}h^{ef}\bigl({\cal D}_e\lambda^A\bigr)\bigl(
 {\cal D}_f\bar\lambda^{A'}\bigr)+\frac{1}{2}\kappa t^aT_{ab}\lambda^B\bar
 \lambda^{B'}, \label{eq:2.1}
\end{eqnarray}
where $t^a$ is the future pointing unit timelike normal of $\Sigma$, $P^a_b=
\delta^a_b-t^at_b$ the projection to $\Sigma$, and $h_{ab}:=P^c_aP^d_bg_{cd}$ is 
the induced \emph{negative definite} metric on $\Sigma$. This equation is just 
the Hodge dual of the pull back to $\Sigma$ of the superpotential equation 
(\ref{eq:1.1.1}), in which the superpotential is the Nester--Witten 2-form, 
and Einstein's equation is used. The total divergence on the left is just the 
Hodge dual of the pull back to $\Sigma$ of the exterior derivative of 
$u(\lambda)_{ab}$. 

The key observation is that the (algebraically) irreducible decomposition 
of the unitary spinor form \cite{Reula} of the ${\cal D}_e$--derivative of 
the spinor field into its totally symmetric part and the traces, 

\begin{equation}
{\cal D}_{EF}\lambda_A={\cal D}_{(EF}\lambda_{A)}+\frac{2\sqrt{2}}{3}
t_F{}^{E'}P^{CC'}_{EE'}\varepsilon_{CA}{\cal D}_{C'D}\lambda^D, 
\label{eq:2.2}
\end{equation}
is $t_{AA'}$-orthogonal, and hence it is an $L_2$--orthogonal decomposition 
also. Here the totally symmetric part of the derivative defines the 3-surface 
twistor operator, while the second term is proportional to the action of the 
Sen--Witten operator (i.e. the Dirac operator built from the Sen connection 
on $\Sigma$) on the spinor field. 

Substituting this decomposition into the Sen--Witten identity and taking its 
integral on $\Sigma$ we obtain 

\begin{equation}
{\tt Q}[\lambda\bar\lambda]+\frac{4\sqrt{2}}{3\kappa}\Vert{\cal D}_{A'A}
\lambda^A\Vert^2_{L_2}=\frac{\sqrt{2}}{\kappa}\Vert{\cal D}_{(AB}\lambda_{C)}
\Vert^2_{L_2}+\int_\Sigma t^aT_{ab}\lambda^B\bar\lambda^{B'}{\rm d}\Sigma. 
\label{eq:2.3}
\end{equation}
Here ${\tt Q}[\lambda\bar\lambda]$ is the integral of $\frac{2}{\kappa}
u(\lambda)_{ab}$ on the boundary $\partial\Sigma$, and the $L_2$--norm of the 
spinor field is defined to be the integral on $\Sigma$ of the pointwise 
Hermitian scalar product $\sqrt{2}t_{AA'}\lambda^A\bar\lambda^{A'}$. This 
equation will play a key role in what follows and we call it the \emph{basic 
norm identity}. Clearly, if $\Sigma$ is asymptotically flat or asymptotically 
hyperboloidal, then the existence of the $L_2$ norms and the integral of the 
energy-momentum tensor require appropriate fall-off properties both for the 
geometry and the energy-momentum tensor, and also for the spinor fields.

%%%%%%%%%%%%%%%%%%%%%%%%%%%%%%%%%%%%%%%%%%%%%%%%%%%%%%%%%%%%%%%%%

\section{Total masses and mass bounds}
\label{sec:3}

\subsection{The asymptotically flat/asymptotically hyperboloidal cases}
\label{sec:3.1}

By the expression (\ref{eq:1.1.4}) of the ADM/BS energy-momenta, the basic 
norm identity (\ref{eq:2.3}) and Witten's gauge condition we obtain the 
following simple expression for the total energy-momenta: 

\begin{equation}
{\tt P}_{\ua}\sigma^{\ua}_{\uA{\uA}'}\,{}_\infty\lambda^{\uA}\,{}_\infty\bar\lambda
^{{\uA}'}=\frac{\sqrt{2}}{\kappa}\Vert{\cal D}_{(AB}\lambda_{C)}\Vert^2_{L_2}+
\int_\Sigma t^aT_{ab}\lambda^B\bar\lambda^{B'}{\rm d}\Sigma. \label{eq:3.1.1}
\end{equation}
This expression is an extension of the result of B\"ackdahl and Valiente-Kroon 
for the ADM energy in vacuum, mentioned in subsection \ref{sec:1.1}: It gives 
\emph{both} the ADM and BS total energy-momenta, even in the non-vacuum case, 
in terms of the $L_2$--norm of the 3-surface twistor derivative of an 
appropriate spinor field and the integral of the energy-momentum tensor of the 
matter fields. 

The right hand side of (\ref{eq:3.1.1}) motivates the introduction of the 
following quantity:

\begin{equation}
{\tt M}:=\inf\Bigl\{\frac{\sqrt{2}}{\kappa}\Vert{\cal D}_{(AB}\lambda_{C)}
\Vert^2_{L_2}+\int_\Sigma t^aT_{ab}\lambda^B\bar\lambda^{B'}{\rm d}\Sigma
\Bigr\}. \label{eq:3.1.2}
\end{equation}
Here the infimum is taken on the set of the spinor fields satisfying the 
boundary and normalization conditions that we had in the asymptotically 
flat/asymptotically hyperboloidal cases, respectively. 

Let us write the components of the energy-momentum as ${\tt P}^{\ua}=({\tt E},
{\tt P}^{\bf i})$. Then, by the positive energy theorem ${\tt E}\geq\vert{\tt 
P}^{\bf i}\vert$, for the total mass we obtain that ${\tt m}^2:={\tt P}^{\ua}
{\tt P}^{\ub}\eta_{\ua\ub}={\tt E}^2-\vert{\tt P}^{\bf i}\vert^2=({\tt E}-
\vert{\tt P}^{\bf i}\vert)({\tt E}+\vert{\tt P}^{\bf i}\vert)\geq({\tt E}-
\vert{\tt P}^{\bf i}\vert)^2$, where $\vert{\tt P}^{\bf i}\vert$ denotes the 
magnitude of the linear momentum ${\tt P}^{\bf i}$. However, since ${\tt E}-
\vert{\tt P}^{\bf i}\vert=\inf\{\,{\tt E}+{\tt P}_{\bf i}v^{\bf i}\,\vert\, 
v^{\bf i}v^{\bf j}\delta_{\bf ij}=1\,\}={\tt M}$, the expression (\ref{eq:3.1.2}) 
provides a lower bound for ${\tt m}$. (Here the infimum is taken on the set of 
the unit vectors $v^{\bf i}$.) By the rigidity part of the positive energy 
theorems (i.e. if ${\tt E}=\vert{\tt P}^{\bf i}\vert$ then the spacetime is 
flat), the lower bound ${\tt M}$ for the ADM and BS masses is \emph{strictly 
positive}, unless the spacetime is flat.

\subsection{The closed case}
\label{sec:3.2}

If $\Sigma$ is closed, i.e. compact with no boundary, then no total 
energy-momentum (and hence mass) can be introduced in the form of a two-surface 
integral. However, the quantity ${\tt M}$ can still be introduced by 
(\ref{eq:3.1.2}) with an appropriately chosen set of spinor fields on which 
the infimum is taken. Clearly, since $\Sigma$ is closed, we do not have any 
boundary condition for the spinor fields, but we should impose a normalization 
condition. The most natural such condition seems to be $\Vert\lambda^A\Vert^2
_{L_2}=1$, which was used in \cite{Sz12}. However, the condition $1=\int_\Sigma 
t_{AA'}\lambda^A\bar\lambda^{A'}{\rm d}\Sigma=\frac{1}{\sqrt{2}}\Vert\lambda^A
\Vert^2_{L_2}$ is even more natural, since it is just the integral of the 
pointwise norm $t_{AA'}\lambda^A\bar\lambda^{A'}$ whose asymptotic form, 
${}_\infty t_{AA'}\,{}_\infty\lambda^A\,{}_\infty\bar\lambda^{A'}=1$, was used to 
normalize the spinor field e.g. at spatial infinity in the asymptotically 
flat case. Another normalization condition could be $\Vert\lambda^A\Vert^2_{L_2}
=\sqrt{2}{\rm vol}(\Sigma)$, i.e. the `average' of the pointwise norm $t_{AA'}
\lambda^A\bar\lambda^{A'}$ on $\Sigma$ would be required to be 1. The physical 
dimension of ${\tt M}$ in the two cases is mass-density and mass, 
respectively. However, although the second choice appears to yield a 
dimensionally correct mass expression, for later convenience we choose the 
first normalization condition. Note that with this normalization ${\tt M}$ 
is $\sqrt{2}$-times of the ${\tt M}$ introduced in \cite{Sz12}. 

Since $\Sigma$ does not have a boundary, the basic norm identity 
(\ref{eq:2.3}) for any spinor field and the definition of ${\tt M}$ yield 
that 

\begin{equation}
\frac{4\sqrt{2}}{3\kappa}\Vert{\cal D}_{A'A}\lambda^A\Vert^2_{L_2}=\Bigl\{
\frac{\sqrt{2}}{\kappa}\Vert{\cal D}_{(AB}\lambda_{C)}\Vert^2_{L_2}+\int
_\Sigma t^aT_{ab}\lambda^B\bar\lambda^{B'}{\rm d}\Sigma\Bigr\}\geq\frac{1}
{\sqrt{2}}{\tt M}\Vert\lambda^A\Vert^2_{L_2}. \label{eq:3.2.1}
\end{equation}
Since ${\tt M}$ was defined as the infimum of an expression on a set of 
certain smooth spinor fields, it is not \emph{a priori} obvious that there 
is a smooth spinor field which saturates the inequality on the right. 
Nevertheless, one can in fact show that \emph{such a spinor field does 
exist} \cite{Sz12}. We will call such a spinor field a \emph{minimizer} 
spinor field. Thus, if $\lambda^A$ is such a minimizer spinor field and 
$\langle\,\cdot,\cdot\,\rangle$ denotes the $L_2$ scalar product, then, for 
\emph{this} spinor field, by (\ref{eq:3.2.1}) we have that 

\begin{equation}
\langle\, 2{\cal D}^{AA'}{\cal D}_{A'B}\lambda^B-\frac{3}{4}\kappa\,{\tt M}
\lambda^A\, ,\,\lambda^A\rangle=2\Vert{\cal D}_{A'A}\lambda^A\Vert^2_{L_2}-
\frac{3}{4}\kappa\,{\tt M}\Vert\lambda^A\Vert^2_{L_2}=0. \label{eq:3.2.2}
\end{equation}
This implies that either the minimizer spinor field is necessarily 
$L_2$--orthogonal to the spinor field $2{\cal D}^{AA'}{\cal D}_{A'B}\lambda^B-
\frac{3}{4}\kappa\,{\tt M}\lambda^A$, or that $\frac{3}{4}\kappa\,{\tt M}$ is 
an \emph{eigenvalue} and the minimizer spinor field is a corresponding 
\emph{eigenspinor} of the operator $2{\cal D}^{AA'}{\cal D}_{A'B}$. We will see 
in the next section that this is, indeed, the case, and $\frac{3}{4}\kappa\,
{\tt M}$ is its \emph{smallest} eigenvalue. 

The (geometrical and physical) significance of ${\tt M}$ is shown by the result 
\cite{Sz12} that the vanishing of ${\tt M}$ is equivalent to the flatness of 
the spacetime with toroidal spatial topology. However, if we allow to have 
locally flat, but holonomically non-trivial spacetime configurations, then 
in such domains the constant spinor fields are not necessarily continuous 
everywhere, and hence our previous theorem should be modified. This 
possibility motivated the following generalization of our previous result:  

\begin{theorem}
Let the matter fields satisfy the dominant energy condition. Then ${\tt M}=0$ 
for some (and hence for any) $\Sigma$ if and only if the spacetime is 
holonomically trivial and the topology of $\Sigma$ is torus: $\Sigma\approx 
S^1\times S^1\times S^1$. \label{th:3.1}
\end{theorem}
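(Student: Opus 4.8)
The plan is to establish the two implications of the "if and only if" separately, leaning throughout on the basic norm identity (\ref{eq:3.2.1}) together with the dominant energy condition. First I would settle the easier direction: if the spacetime is holonomically trivial with $\Sigma\approx S^1\times S^1\times S^1$, then $\mathcal{D}_a$-parallel spinor fields exist globally on $\Sigma$. Holonomic triviality means the spacetime curvature vanishes and that parallel transport is path-independent, so a constant spinor field $\lambda^A$ extends smoothly and single-valuedly over all of $\Sigma$; on a torus this is genuinely possible precisely because $\pi_1(\Sigma)$ acts trivially on the (flat) spin bundle. For such a $\lambda^A$ both $\mathcal{D}_{(AB}\lambda_{C)}$ and $\mathcal{D}_{A'A}\lambda^A$ vanish, and since flatness forces $T_{ab}=0$ (via Einstein's equations), the bracket on the right of (\ref{eq:3.2.1}) is zero for a nonzero test field; hence the infimum ${\tt M}$ is $\le 0$. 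The dominant energy condition gives $t^aT_{ab}\lambda^B\bar\lambda^{B'}\ge0$ pointwise for every $\lambda^A$, so the bracket is always $\ge0$ and ${\tt M}\ge0$. Therefore ${\tt M}=0$.

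For the converse — the substantive direction — I would argue: suppose ${\tt M}=0$ for some $\Sigma$. Let $\lambda^A$ be a minimizer spinor field (its existence is quoted from \cite{Sz12}). Then (\ref{eq:3.2.1}) with ${\tt M}=0$ forces
\begin{equation}
\frac{\sqrt2}{\kappa}\Vert\mathcal{D}_{(AB}\lambda_{C)}\Vert^2_{L_2}
+\int_\Sigma t^aT_{ab}\lambda^B\bar\lambda^{B'}\,{\rm d}\Sigma=0,
\nonumber
\end{equation}
and since both terms are nonnegative under the dominant energy condition, each vanishes: $\mathcal{D}_{(AB}\lambda_{C)}=0$ (so $\lambda^A$ solves the 3-surface twistor equation) and, because $\frac{4\sqrt2}{3\kappa}\Vert\mathcal{D}_{A'A}\lambda^A\Vert^2_{L_2}$ equals that same vanishing bracket, also $\mathcal{D}_{A'A}\lambda^A=0$. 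By the decomposition (\ref{eq:2.2}), both pieces of $\mathcal{D}_{EF}\lambda_A$ vanish, so $\mathcal{D}_{EF}\lambda_A=0$, i.e. $\lambda^A$ is \emph{constant with respect to the Sen connection} on all of $\Sigma$. One then pushes this to the spacetime level: the integrand $t^aT_{ab}\lambda^B\bar\lambda^{B'}$ vanishes and $\lambda^A$ is nonzero on a dense set (indeed everywhere, since $\mathcal{D}_{EF}\lambda_A=0$ and the normalization $\Vert\lambda\Vert^2_{L_2}=\sqrt2$ prevent it from vanishing), so the dominant energy condition forces $T_{ab}t^b$ to be a null or zero multiple of $t^a$; feeding the Sen-parallel $\lambda^A$ into the Sen--Witten-type integrability/commutator relations (the same curvature-algebra computation used in the proof of the original flatness result in \cite{Sz12}) shows the spacetime curvature — Weyl plus Ricci — vanishes along $\Sigma$, and by the constraint equations and propagation this yields a flat development.

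The remaining point, and the one I expect to be the main obstacle, is the \emph{global/topological} conclusion: upgrading "locally flat along $\Sigma$, carrying a Sen-parallel spinor" to "holonomically trivial with $\Sigma$ a torus." Here the Sen-parallel spinor $\lambda^A$ is a genuine global section that is annihilated by $\mathcal{D}_{EF}$; its existence means the restricted holonomy of $\Sigma$ (with the Sen connection, which for flat data is the intrinsic Levi-Civita connection) acts trivially on at least a one-dimensional subspace of the spin bundle, and the rigidity argument (invoking the reality/conjugation structure and, if needed, the even-multiplicity statement announced in the abstract) promotes this to triviality of the \emph{full} holonomy representation. A flat closed $3$-manifold with trivial holonomy is — by Bieberbach's theorem — the torus $S^1\times S^1\times S^1$; triviality of the spacetime holonomy (not merely the spatial one) then follows because the extrinsic curvature is constrained by the vanishing of $T_{ab}$ and the Gauss--Codazzi relations. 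The delicate part is checking that the minimizer spinor is nowhere zero and that its parallelism really does kill \emph{all} of the holonomy rather than a proper sub-representation; I would handle this exactly as in \cite{Sz12}, modified only by replacing "continuous constant spinor field globally on $\Sigma$" with "Sen-parallel spinor field," which is automatically global and single-valued once $\pi_1(\Sigma)$ acts trivially on the flat spin bundle — and it is precisely the failure of this single-valuedness that the word "holonomically" is inserted to exclude in the weaker "locally flat" hypothesis.
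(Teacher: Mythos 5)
The forward direction and the very first steps of your converse are correct and coincide with the paper's: with ${\tt M}=0$ the minimizer spinor field satisfies ${\cal D}_{(AB}\lambda_{C)}=0$, the basic norm identity then gives ${\cal D}_{A'A}\lambda^A=0$, and the decomposition (\ref{eq:2.2}) yields ${\cal D}_e\lambda^A=0$, a nowhere-vanishing Sen-constant spinor. The genuine gap is the next step. The integrability (commutator) condition of ${\cal D}_e\lambda^A=0$ does \emph{not} show that ``the spacetime curvature --- Weyl plus Ricci --- vanishes along $\Sigma$''. It only forces the Sen curvature two-form $F_{ab}{}^{CD}$ (symmetric in $CD$) to annihilate $\lambda_D$, and since the spin space is two-complex-dimensional this merely constrains it to the null algebraic type $F_{ab}{}^{CD}=f_{ab}\lambda^C\lambda^D$ aligned with $\lambda^A$; pp-wave--like curvature survives. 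Killing this residual piece is precisely the hard part, and the paper does it by a \emph{global} argument for which your proposal has no substitute: $Z_a:=P^b_a\lambda_B\bar\lambda_{B'}$ is surface-orthogonal, $Z_a=D_au$; the level sets of $u$ are closed, locally flat 2-surfaces, hence 2-tori by Gauss--Bonnet, so $\Sigma$ is foliated by tori and is itself a 3-torus (the topology is obtained \emph{before} flatness, not deduced from it); and only then do the field equations, reduced to a Poisson equation on these closed tori with a source that is non-negative by the dominant energy condition, force both the source and the residual curvature components to vanish --- the integral of a Laplacian over a closed surface being zero is where compactness enters.

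This gap also undermines your route to the topology: Bieberbach's theorem presupposes that $(\Sigma,h_{ab})$ is intrinsically flat with trivial Levi-Civita holonomy, but ${\cal D}_e\lambda^A=0$ is parallelism for the \emph{Sen} connection (which differs from $D_e$ by extrinsic-curvature terms), so you have neither intrinsic flatness nor triviality of the relevant holonomy at the point where you invoke it. Your closing remark --- that a globally continuous constant spinor forces every loop holonomy $H_\gamma$ to be trivial --- is the paper's final step and is fine, but it can only be made after flatness has been established by the foliation-plus-Poisson argument above.
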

\begin{proof}
Since the detailed proof (with a different line of argument) of the original 
statement is given in \cite{Sz12}, here we only summarize its key points and 
concentrate on the difference between the present and the original statements. 
Clearly, in holonomically trivial spacetime with $\mathbb{R}\times S^1\times 
S^1\times S^1$ global topology there are globally defined constant spinor 
fields, which satisfy the 3-surface twistor equation. Therefore, ${\tt M}=0$. 

Conversely, suppose that ${\tt M}=0$. Then the minimizer spinor field 
satisfies ${\cal D}_{(AB}\lambda_{C)}$ $=0$, and hence by the basic norm 
identity ${\cal D}_e\lambda^A=0$ follows. This implies that $Z_a:=P^b_a\lambda
_B\bar\lambda_{B'}$ is surface-orthogonal, i.e. $Z_a=D_au$ for some (locally 
defined) function $u$ on $\Sigma$. Using the Gauss equation for hypersurfaces 
one can show that the level sets ${\cal S}_u:=\{\,u={\rm const}\,\}$ are 
\emph{locally flat} 2-surfaces in $\Sigma$. 

Next, one can show that the level sets ${\cal S}_u$ are \emph{closed 
surfaces}, which by the Gauss--Bonnet theorem are necessarily \emph{two-tori}. 
$\Sigma$ is globally foliated by these tori, and hence it is a 3-torus. 

Finally, we use the field equations. In an appropriate spacetime coordinate 
system adapted to the geometry they reduce to a single Poisson equation with 
a source term on the level sets ${\cal S}_u$. This source is, however, 
non-negative by the dominant energy condition. To have a non-trivial solution 
to this equation both the source term and the not a priori zero components of 
the curvature have to be zero, i.e. the spacetime is flat. Let $\gamma$ be any 
smooth homotopically non-trivial closed curve. Then the constant spinor field 
$\lambda^A$ is clearly parallelly propagated along $\gamma$. However, if the 
holonomy $H_\gamma$ were not trivial, then $\lambda^A$ could not be continuous 
everywhere along $\gamma$. Thus the whole holonomy group must be trivial. 
\end{proof}

By definition, ${\tt M}$ is non-negative, and hence the content of this 
theorem is analogous to the \emph{rigidity part} of the positive energy 
theorems for the ADM/BS masses. Thus ${\tt M}$ is a \emph{positive definite 
measure} of the strength of the gravitational field. Since the \emph{physical 
dimension} of ${\tt M}{\rm vol}(\Sigma)$ is mass, moreover it is given by 
precisely the formula that we had in the previous subsection for the positive 
lower bound for the ADM/BS masses, it seems plausible to interpret ${\tt M}$ 
as the \emph{total mass density of closed universes} at the instant represented 
by the hypersurface $\Sigma$.

%%%%%%%%%%%%%%%%%%%%%%%%%%%%%%%%%%%%%%%%%%%%%%%%%%%%%%%%%%%%

\section{The eigenvalue problem}
\label{sec:4}

Since in four dimensions the spinors are the four-component Dirac spinors 
(see e.g. the appendix of \cite{PR}), and the Sen connection from which the 
Sen--Witten operator is constructed is defined on a vector bundle over 
$\Sigma$ whose fibers are four dimensional Lorentzian vector spaces, the 
eigenvalue problem should be formulated in terms of Dirac spinors. (For 
the discussion of the difficulties with other approaches, see \cite{Sz12}.) 
Thus, the eigenvalue problem for the Sen--Witten operator is defined by 
${\rm i}\gamma^\alpha_{e\beta}{\cal D}^e\Psi^\beta=\alpha\Psi^\alpha$, where 
the Greek indices are abstract indices referring to the space of the Dirac 
spinors, and $\gamma^\alpha_{e\beta}$ are Dirac's `$\gamma$-matrices'. Recalling 
that a Dirac spinor $\Psi^\alpha$ is a pair $(\lambda^A,\bar\mu^{A'})$ of Weyl 
spinors, with the explicit form of $\gamma^\alpha_{e\beta}$ given in \cite{PR} 
the eigenvalue problem is equivalent to the pair 

\begin{equation}
{\rm i}{\cal D}_{A'A}\lambda^A=\frac{\alpha}{\sqrt{2}}\bar\mu_{A'}, 
\hskip 20pt
{\rm i}{\cal D}_{AA'}\bar\mu^{A'}=\frac{\alpha}{\sqrt{2}}\lambda_A,
\label{eq:4.1}
\end{equation}
of equations. Taking the action of ${\cal D}^{BA'}$ on the first of these 
equations and eliminating $\bar\mu_{A'}$ by the second, we obtain that the 
eigenvalue problem is equivalent to $2{\cal D}^{AA'}{\cal D}_{A'B}\lambda^B=
\alpha^2\lambda^A$. Moreover, (\ref{eq:4.1}) implies that the eigenvalue 
$\alpha$ is \emph{real}: $0\leq2\Vert{\cal D}_{A'A}\lambda^A\Vert^2_{L_2}=
2\langle\,{\cal D}^{AA'}{\cal D}_{A'B}\lambda^B,\lambda^A\,\rangle=\alpha^2
\Vert\lambda^A\Vert^2_{L_2}$. Now we show that the multiplicity of every 
non-zero eigenvalue $\alpha^2$ is even. In fact, if $\lambda^A$ is 
an eigenspinor, then $\mu^A=-{\rm i}\frac{\sqrt{2}}{\alpha}{\cal D}^A{}_{A'}
\bar\lambda^{A'}$ is also an eigenspinor. But if $\mu^A$ were not independent 
of $\lambda^A$, then $\mu^A=c\lambda^A$ would hold for some non-zero complex 
constant $c$. Substituting this back into (\ref{eq:4.1}) we obtain that 
$\alpha(1+\vert c\vert^2)\lambda^A=0$, which would contradict $\alpha\not=0$. 
Thus, for each Dirac eigenspinor $\Psi^\alpha$, we have a pair of independent 
eigenspinors of $2{\cal D}^{AA'}{\cal D}_{A'B}$. 

Applying the basic norm identity to the eigenspinor $\lambda^A$ we obtain that 
$\alpha^2\geq\frac{3}{4}\kappa\,{\tt M}$, i.e. $\frac{3}{4}\kappa\,{\tt M}$ is 
a \emph{lower bound} for all the eigenvalues of $2{\cal D}^{AA'}{\cal D}_{A'B}$. 
We will see that this is precisely the smallest eigenvalue $\alpha^2_1$, i.e. 
a \emph{sharp} lower bound. The proof of this statement is based on the 
following functional analytic properties of the operator: 

\begin{theorem}
There is a dense subspace ${\rm Dom}({\cal D}^*{\cal D})\subset H_1(\Sigma,
\mathbb{S}^A)$ of the first Sobolev space of the unprimed spinor fields on 
$\Sigma$ such that it contains the space $C^\infty(\Sigma,\mathbb{S}^A)$ of 
the smooth spinor fields, and there is an extension of the operator ${\cal 
D}^{AA'}{\cal D}_{A'B}$ from $C^\infty(\Sigma,\mathbb{S}^A)$ to ${\rm Dom}({\cal 
D}^*{\cal D})$ such that ${\cal D}^{AA'}{\cal D}_{A'B}:{\rm Dom}({\cal D}^*
{\cal D})\rightarrow L_2(\Sigma,\mathbb{S}^A)$ is a positive self-adjoint 
Fredholm operator with compact resolvent. \label{th:4.1}
\end{theorem}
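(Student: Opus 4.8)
The plan is to realize the operator $\mathcal{D}^{AA'}\mathcal{D}_{A'B}$ as the operator associated with a closed, densely defined, non-negative quadratic form, and then invoke the standard functional-analytic machinery (Friedrichs extension, Rellich--Kondrachov, Lax--Milgram) together with the ellipticity of the underlying first-order operator. Concretely, first I would introduce the sesquilinear form
\[
  Q(\lambda,\mu):=2\langle\,\mathcal{D}_{A'A}\lambda^A\,,\,\mathcal{D}_{A'B}\mu^B\,\rangle
\]
on $C^\infty(\Sigma,\mathbb{S}^A)$, which by the decomposition (\ref{eq:2.2}) and the basic norm identity (\ref{eq:2.3}) satisfies $Q(\lambda,\lambda)=\|\mathcal{D}_{A'A}\lambda^A\|_{L_2}^2\cdot(\text{const})\geq 0$; integration by parts shows $Q(\lambda,\mu)=\langle\,2\mathcal{D}^{AA'}\mathcal{D}_{A'B}\mu^B\,,\,\lambda^A\,\rangle$ on smooth fields, so $Q$ is symmetric and the associated operator agrees with $\mathcal{D}^{AA'}\mathcal{D}_{A'B}$ (up to the factor). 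Since $\Sigma$ is closed there are no boundary terms, so this integration by parts is clean.

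Next I would establish a Gårding-type inequality. The Sen operator $\mathcal{D}_{A'A}$ differs from an elliptic (Riemannian-Dirac-type) first-order operator $\slashed{\mathcal{D}}$ on $\Sigma$ only by a zeroth-order term built from the extrinsic curvature; hence the Weitzenböck/Bochner-type identity expresses $\|\mathcal{D}_{A'A}\lambda^A\|_{L_2}^2$ in terms of $\|\nabla\lambda\|_{L_2}^2$ plus lower-order curvature terms, giving
\[
  Q(\lambda,\lambda)+C\|\lambda\|_{L_2}^2\ \geq\ c\,\|\lambda\|_{H_1}^2
\]
for suitable constants $c>0$, $C\geq 0$. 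This is the coercivity that lets me close the form on $H_1(\Sigma,\mathbb{S}^A)$: I define $\mathrm{Dom}(Q):=H_1$ and check that $C^\infty$ is dense there (standard) and that $Q$ is closed (it is equivalent to the $H_1$ inner product after adding $C\|\cdot\|_{L_2}^2$). The Friedrichs construction then produces a self-adjoint operator $A\geq 0$ with $\mathrm{Dom}(A)=:\mathrm{Dom}(\mathcal{D}^*\mathcal{D})\subset H_1$, dense in $L_2$, containing $C^\infty$, and extending $\mathcal{D}^{AA'}\mathcal{D}_{A'B}$; positivity is immediate from $Q\geq 0$.

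For the compact resolvent, I would use that $(A+C)^{-1}:L_2\to\mathrm{Dom}(Q)=H_1$ is bounded (by coercivity), composed with the compact embedding $H_1(\Sigma)\hookrightarrow L_2(\Sigma)$ (Rellich--Kondrachov, valid because $\Sigma$ is compact), so $(A+C)^{-1}$ is compact on $L_2$, hence $A$ has compact resolvent; Fredholmness of $A:\mathrm{Dom}(\mathcal{D}^*\mathcal{D})\to L_2$ then follows since an operator with compact resolvent has discrete spectrum and finite-dimensional eigenspaces, so $A-\mu$ is Fredholm of index zero for every $\mu$ (in particular $\mu=0$). Finally, elliptic regularity for the second-order elliptic operator $\mathcal{D}^{AA'}\mathcal{D}_{A'B}$ (it has the same principal symbol as the Laplacian up to sign, by ellipticity of $\mathcal{D}_{A'A}$) guarantees that eigenspinors are smooth, so the spectral resolution lives inside $C^\infty(\Sigma,\mathbb{S}^A)$ as needed for the eigenvalue statements in the next paragraph of the paper.

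The main obstacle is verifying the Gårding inequality with the correct signs: the induced metric $h_{ab}$ on $\Sigma$ is negative definite in this paper's conventions, and $\mathcal{D}_{A'A}$ is the Sen (not Levi-Civita) operator, so one must track carefully the zeroth-order extrinsic-curvature contributions in the Sen--Witten/Weitzenböck identity to be sure they can be absorbed into $C\|\lambda\|_{L_2}^2$ and do not spoil coercivity — this is where the dominant-energy or at least boundedness-of-curvature input enters. Once coercivity holds, the rest is the textbook Friedrichs--Rellich package.
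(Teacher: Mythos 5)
Your proposal is correct and takes essentially the route the paper relies on: the paper gives no in-text proof but defers to the Appendix of \cite{Sz12}, which establishes exactly this Friedrichs-extension/G\aa rding/Rellich package, with the coercivity obtained by combining the $L_2$-orthogonal decomposition (\ref{eq:2.2}) with the integrated Sen--Witten identity (\ref{eq:2.3}) so that the full Sen derivative (and hence the $H_1$ norm) is controlled by $\Vert{\cal D}_{A'A}\lambda^A\Vert^2_{L_2}+{\rm const}\,\Vert\lambda^A\Vert^2_{L_2}$. One minor remark: no dominant-energy assumption is needed for the G\aa rding inequality --- on a closed $\Sigma$ with smooth data the extrinsic-curvature and $T_{ab}$ coefficients are automatically bounded, so they can be absorbed into $C\Vert\lambda^A\Vert^2_{L_2}$ regardless of sign.
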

There is a similar result for the composition ${\cal T}^*{\cal T}$ of the 
3-surface twistor operator and its adjoint, too. The detailed proof of these 
statements is given in the Appendix of \cite{Sz12}. 

The significance of this theorem is that, via standard theorems of functional 
analysis, it guarantees that (1) the spectrum of $2{\cal D}^{AA'}{\cal D}_{A'B}$ 
(and of $2{\cal T}^*{\cal T}$, too) is purely discrete, and that (2) the 
corresponding eigenspinors span the whole space $L_2(\Sigma,\mathbb{S}^A)$. 
Thus we can order the eigenvalues into the increasing sequence $\alpha^2_1
\leq\alpha^2_2\leq\cdots\leq\alpha^2_i\leq\cdots$, and the sequence of the 
corresponding independent eigenspinors are denoted by $\{\lambda^A_i\}$. 
(Because of the multiplicity of the eigenvalues we should allow equality in 
the sequence of the eigenvalues.) Clearly, the eigenspinors $\{\lambda^A_i\}$ 
can be chosen to form an $L_2$-orthogonal system. 

Then let us expand the minimizer spinor field as $\lambda^A=\sum_ic_i\lambda
^A_i$, where $c_i\in\mathbb{C}$. Substituting this form of $\lambda^A$ into 
(\ref{eq:3.2.2}) we find 

\begin{equation}
0=\langle\,2{\cal D}^{AA'}{\cal D}_{A'B}\lambda^B-\frac{3}{4}\kappa\,{\tt M}
\lambda^A\, ,\,\lambda^A\rangle=\sum_i\vert c_i\vert^2\big(\alpha^2_i-
\frac{3}{4}\kappa\,{\tt M}\bigr)\Vert\lambda^A_i\Vert_{L_2}. \label{eq:4.2}
\end{equation}
Taking into account that $\frac{3}{4}\kappa\,{\tt M}$ is a lower bound for 
\emph{all} the eigenvalues $\alpha^2$, we conclude that 

\begin{equation}
\alpha^2_1=\frac{3}{4}\kappa\,{\tt M}, \label{eq:4.3}
\end{equation}
otherwise $\lambda^A$ would have to be zero. Therefore, the total mass 
(density) of closed universes can be recovered as the first eigenvalue of the 
Sen--Witten operator, which result makes ${\tt M}$ a well \emph{computable} 
quantity. Looking at (\ref{eq:4.3}) from the point of view of the spectral 
characterization of geometries, it is an explicit formula for the first 
eigenvalue of the Sen--Witten operator rather than only a lower bound for it. 
Hence (\ref{eq:4.3}) is a generalization of the result \cite{HZ} of Hijazi 
and Zhang.

%%%%%%%%%%%%%%%%%%%%%%%%%%%%%%%%%%%%%%%%%%%%%%%%%%%%%%%%%%%%

\section{On the gauge conditions}
\label{sec:5}

An immediate consequence of the basic norm identity and Theorem \ref{th:3.1} 
is that \emph{Witten's gauge condition admits non-trivial solution if and only 
if ${\tt M}=0$}, i.e. precisely when the spacetime is holonomically trivial 
with toroidal $\Sigma$. 

To discuss the existence of solutions to the approximate twistor equation we 
need the explicit form of ${\cal T}^*{\cal T}$. First, the formal adjoint of 
the 3-surface twistor operator, $\mu_{ABC}\mapsto{\cal T}^*(\mu)_A$, $\mu_{ABC}
=\mu_{(ABC)}$, is 

\begin{equation}
{\cal T}^*\bigl(\mu\bigr)_A={}^+{\cal D}^{BC}\mu_{ABC}, \label{eq:5.1}
\end{equation}
where ${}^+{\cal D}_{AB}\lambda_C:=D_{AB}\lambda_C-\frac{1}{\sqrt{2}}\chi_{ABC}
{}^D\lambda_D$ and $\chi_{ABCD}$ is the unitary spinor form of the extrinsic 
curvature. Then, comparing the explicit form of ${\cal T}^*{\cal T}$ with 
that of ${\cal D}^{AA'}{\cal D}_{A'B}$ (given e.g. in terms of the Levi-Civita 
derivative operator $D_a$ and the extrinsic curvature $\chi_{ab}$), we find 
that 

\begin{equation}
{\cal T}^*{\cal T}(\lambda)_A=\frac{4}{3}{\cal D}_{AA'}{\cal D}^{A'B}
\lambda_B-\kappa\,t_eT^{eA'B}t_{A'A}\lambda_B. \label{eq:5.2}
\end{equation}
Thus, apart from a zeroth order operator, the approximate twistor operator is 
essentially the square of the Sen--Witten operator. Since the kernel of the 
Sen--Witten and the square of the Sen--Witten operators coincide, this 
implies that in vacuum it does \emph{not} admit any solution. 

On the other hand, the results of section \ref{sec:4} suggest a potentially 
viable alternative gauge condition. Namely, let us choose the 
\emph{eigenspinors} of the Sen--Witten operator corresponding to the 
\emph{first eigenvalue}. We showed in section \ref{sec:4} that there exist 
at least two such linearly independent spinor fields. Our conjecture is that 
the eigenspinors with the first eigenvalue can have no zeros, and hence, in 
particular, the number of these eigenspinors is precisely two. (In fact, if 
there were three such linearly independent eigenspinors, say $\lambda^A_1$, 
$\lambda^A_2$ and $\lambda^A_3$, then, since the spin space is two-complex 
dimensional, for any point $p\in\Sigma$ there would be non-zero complex 
constants $c_1$, $c_2$ and $c_3$ such that $c_1\lambda^A_1(p)+c_2\lambda^A_2(p)
+c_3\lambda^A_3(p)=0$ would hold, i.e. the eigenspinor $\lambda^A:=c_1\lambda
^A_1+c_2\lambda^A_2+c_3\lambda^A_3$ would have a zero at $p$.) Therefore, this 
gauge condition would yield a geometrically distinguished 3-parameter family 
of globally defined orthonormal \emph{vector bases} and \emph{lapse functions} 
on $\Sigma$.

%%%%%%%%%%%%%%%%%%%%%%%%%%%%%%%%%%%%%%%%%%%%%%%%%%%%%%%%%%%%

\section{Examples}
\label{sec:6}

\subsection{Bianchi I. spacetimes}
\label{sec:6.1}

Let $\Sigma$ be a $t={\rm const}$ hypersurface in the Bianchi I. cosmological 
model with toroidal spatial topology. The induced intrinsic metric on $\Sigma$ 
is \emph{flat}, and let us write the corresponding line element as $dh^2=-(a^2d
\psi^2+b^2d\theta^2+c^2d\phi^2)$, where $a,b,c\in(0,\infty)$ and the coordinates 
are $\psi,\theta,\phi\in[0,2\pi)$. In the global orthonormal basis adapted to 
the spatial symmetries the extrinsic curvature of $\Sigma$ can be written as 
$\chi_{ab}={\rm diag}(\chi_1,\chi_2,\chi_3)$, where the diagonal elements are 
constant on $\Sigma$. 

First, we calculate the spectrum of three differential operators. The simplest 
one is the square of the Riemannian Dirac operator, built from the intrinsic 
(flat) Levi-Civita connection $D_e$. Since the Sen--Witten operator reduces 
to the Riemannian Dirac operator when the extrinsic curvature is vanishing, it 
seems natural to define its eigenvalue problem by $2D^{AA'}D_{A'B}\lambda^B=
\beta^2\lambda^A$. Then elementary calculations yield its spectrum: 

\begin{equation}
\beta=\pm\sqrt{(\frac{n_1}{a})^2+(\frac{n_2}{b})^2+(\frac{n_3}{c})^2}, 
\hskip 20pt n_1,n_2,n_3\in\mathbb{Z}, \label{eq:6.1.1}
\end{equation}
Thus, from the first few eigenvalues of the Riemannian Dirac operator we can 
recover the constants $a$, $b$ and $c$, i.e. \emph{the spatial geometry 
$(\Sigma,h_{ab})$ can be characterized completely by the spectrum of the 
Riemannian Dirac operator}. 

Since $\chi_{ab}$ is constant on $\Sigma$, it is easy to see that $2{\cal D}
^{AA'}{\cal D}_{A'B}\lambda^B=2D^{AA'}D_{A'B}\lambda^B+\frac{1}{4}\chi^2\lambda^A$. 
Therefore, if the eigenvalue problem for the Sen--Witten operator is defined 
by $2{\cal D}^{AA'}{\cal D}_{A'B}\lambda^B=\alpha^2\lambda^A$ (as in section 
\ref{sec:4}), then 

\begin{equation}
\alpha^2=\beta^2+\frac{1}{4}\chi^2. \label{eq:6.1.2}
\end{equation}
Thus \emph{the spatial geometry and the mean extrinsic curvature can be 
characterized completely by the spectrum of the Sen--Witten operator}. 

Finally, we define the eigenvalues of the 3-surface twistor operator by 
$2{\cal T}^*{\cal T}(\lambda)_A=\tau^2\lambda_A$, where the adjoint 
${\cal T}^*$ of the 3-surface twistor operator has been given explicitly by 
(\ref{eq:5.1}). For its eigenvalues we obtain that 

\begin{equation}
\tau^2=\frac{4}{3}\beta^2+\frac{1}{2}\big(\chi_{ab}-\frac{1}{3}\chi h_{ab}\big)
\big(\chi^{ab}-\frac{1}{3}\chi h^{ab}\big). \label{eq:6.1.3}
\end{equation}
Hence \emph{the spatial geometry and the magnitude of the trace free part of 
$\chi_{ab}$ can be characterized completely by the spectrum of the 3-surface 
twistor operator}. It could be interesting to see whether, in addition to the 
trace of $\chi_{ab}$ and of $\chi_{ac}\chi^c{}_b$, the trace of the \emph{cube} 
of the extrinsic curvature also can be recovered from the spectrum of some 
additional (probably higher order) elliptic operator. In this case we would 
have a \emph{complete} characterization of the initial data sets for the 
geometry of the closed Bianchi I. cosmological spacetimes. 

Next, let us calculate the total mass density in the closed Bianchi I. 
cosmological model. Since by the Hamiltonian constraint $\chi^2=2\kappa\mu+
\chi_{ab}\chi^{ab}$, by (\ref{eq:4.3}) and (\ref{eq:6.1.2}) it is 

\begin{equation}
{\tt M}=\frac{1}{3\kappa}\chi^2=\mu+\frac{1}{2\kappa}(\chi_{ab}-\frac{1}{3}
\chi h_{ab})(\chi^{ab}-\frac{1}{3}\chi h^{ab}). \label{eq:6.1.4}
\end{equation}
Thus the anisotropy of the extrinsic curvature, which is essentially the first 
eigenvalue of the 3-surface twistor operator, contributes to ${\tt M}$. 
Equation (\ref{eq:6.1.4}) illustrates how Theorem \ref{th:3.1} works: ${\tt M}
=0$, together with the Hamiltonian constraint, really imply flatness. 

We calculate the time derivative of ${\tt M}$ with respect to an evolution 
vector field compatible with the spacetime symmetries, $K^a=Nt^a$, where $N$ 
is constant on $\Sigma$. It is 

\begin{equation}
\dot{\tt M}=N\chi\Bigl(\frac{1}{3}h^{ab}\sigma_{ab}-\mu-\frac{1}{\kappa}\bigl(
\chi_{ab}-\frac{1}{3}\chi h_{ab}\bigr)\bigl(\chi^{ab}-\frac{1}{3}\chi h^{ab}\bigr)
\Bigr), \label{eq:6.1.5}
\end{equation}
where $\sigma_{ab}:=P^c_aP^d_bT_{cd}$, the spatial stress of the matter fields, 
and whose trace gives the average (isotropic) pressure: $p=-\frac{1}{3}h^{ab}
\sigma_{ab}$. In the `mean expanding phase' (i.e. when $\chi>0$) with `normal' 
matter (i.e. $\mu,p\geq0$) the total mass density ${\tt M}$ is 
\emph{decreasing}. Similarly, we can compute the time derivative of the total 
mass ${\tt M}{\rm vol}(\Sigma)$, too. In the `mean expanding phase' it is also 
negative. This behaviour is compatible with the interpretation that ${\tt M}$ 
is a positive definite measure of the strength of the gravitational field.

%%%%%%%%%%%%

\subsection{FRW spacetimes}
\label{sec:6.2}

In the initial data set for a closed Friedman--Robertson--Walker spacetime 
there are only two independent geometrical quantities, the spatial scalar 
curvature ${\cal R}$ and the trace $\chi$ of the extrinsic curvature. These 
can be recovered from the first eigenvalue of the Riemannian Dirac and of the 
Sen--Witten operator, respectively: $\beta^2_1=\frac{3}{8}{\cal R}$, $\alpha
^2_1=\beta^2_1+\frac{1}{4}\chi^2$. 

To calculate the total mass density we need the Hamiltonian constraint. It 
is $\frac{1}{2}{\cal R}+\frac{1}{3}\chi^2=\kappa\mu$. Thus 

\begin{equation}
{\tt M}=\mu, \hskip 20pt
\dot{\tt M}=\frac{1}{3}N\chi\Bigl(h^{ab}\sigma_{ab}-3\mu\Bigr).
\label{eq:6.2.1}
\end{equation}
In the expanding phase (i.e. when $\chi>0$) with `normal' matter both ${\tt 
M}$ and ${\tt M}{\rm vol}(\Sigma)$ are \emph{decreasing}.

%%%%%%%%%%%%%%%%%%%%%%%%%%%%%%%%%%%%%%%%%%%%%%%%%%%%%%%%%%%

\section{Summary}
\label{sec:7}

The quantity ${\tt M}$, defined by equation (\ref{eq:3.1.2}) with the set of 
smooth spinor fields satisfying appropriate boundary conditions at infinity 
on asymptotically flat or asymptotically hyperboloidal hypersurfaces, provides 
a positive lower bound for the ADM and Bondi--Sachs masses, respectively. 

On \emph{closed} hypersurfaces for the \emph{same} ${\tt M}$ (defined with a 
different set of the spinor fields) the following statements have been proven: 

\begin{itemize}
\item ${\tt M}=0$ iff the spacetime is holonomically trivial with toroidal 
Cauchy hypersurface,

\item ${\tt M}$ gives the first eigenvalue $\alpha^2_1$ of the square 
$2{\cal D}^{AA'}{\cal D}_{A'B}$ of the Sen--Witten operator: 

$$
\alpha^2_1=\frac{3}{4}\kappa\,{\tt M}, \nonumber
$$

\item Witten's gauge condition, ${\cal D}_{A'A}\lambda^A=0$, admits a 
non-trivial solution iff ${\tt M}=0$. 

\end{itemize}
Here we showed that in general the so-called approximate twistor operator 
cannot be used to determine a gauge condition in closed universes. 
Nevertheless, we suggested an alternative gauge condition, viz. the use of 
the \emph{eigenspinors} of $2{\cal D}^{AA'}{\cal D}_{A'B}$ corresponding to 
the first eigenvalue above.

Through simple examples we illustrated how the geometry of the data sets 
for closed universes could be characterized by the spectrum of the Sen--Witten 
and the 3-surface twistor operators. In these examples we also calculated 
the quantity ${\tt M}$ and its time derivative. The results support the 
interpretation of ${\tt M}$, suggested the general properties listed above: 
It, as a positive definite measure of the strength of the gravitational 
`field', can be interpreted as the \emph{total mass density} of closed 
universes at the instant represented by the closed hypersurface $\Sigma$. 
Nevertheless, the ultimate answer to the question whether this is a 
reasonable and useful notion will be given by the future applications. 

\medskip
I would like to thank the organizers for the invitation to the Spanish 
Relativity Meeting in Portugal at Guimar\~{a}es, 2012 September, where these 
results could be presented.

\end{document}